\documentclass[aps,prl,twocolumn,showpacs,superscriptaddress]{revtex4-2}

\usepackage[utf8]{inputenc}
\usepackage[english]{babel}

\usepackage{amsthm}
\usepackage{mathrsfs}
\usepackage{amssymb} 
\newtheorem{theorem}{Theorem}
\usepackage[letterpaper,top=2cm,bottom=2cm,left=3cm,right=3cm,marginparwidth=1.75cm]{geometry}

\usepackage{amsmath}
\usepackage{graphicx}
\usepackage[colorlinks=true, allcolors=blue]{hyperref}

\usepackage{url}
\usepackage{braket}

\usepackage{xcolor}

\usepackage{subcaption}

\begin{document}

\title{Error-mitigated photonic quantum circuit Born machine}
\author{Alexia Salavrakos}
\email{alexia.salavrakos@quandela.com}
\affiliation{Quandela, 7 Rue Léonard de Vinci, 91300 Massy, France}
\author{Tigran Sedrakyan} 
\affiliation{Quandela, 7 Rue Léonard de Vinci, 91300 Massy, France}
\author{James Mills}
\affiliation{Quandela, 7 Rue Léonard de Vinci, 91300 Massy, France}
\affiliation{School of Informatics, University of Edinburgh, 10 Crichton Street, Edinburgh EH8 9AB, Scotland}
\author{Shane Mansfield}
\affiliation{Quandela, 7 Rue Léonard de Vinci, 91300 Massy, France}
\author{Rawad Mezher}
\affiliation{Quandela, 7 Rue Léonard de Vinci, 91300 Massy, France}

\begin{abstract}
In this article, we study quantum circuit Born machines (QCBMs) in the context of photonic quantum computing. QCBMs are a popular choice of quantum generative machine learning models, and we present a QCBM designed for linear optics. We show that a recently developed error mitigation technique called recycling mitigation greatly improves the training of QCBMs in realistic scenarios with photon loss, which is the primary source of noise in photonic systems. We demonstrate this through numerical simulations and through an experiment on a quantum photonic integrated processor. We expect our work to pave the way towards more demonstrations of error mitigation techniques tailored to photonic devices which can enhance the performance of a quantum algorithm.
\end{abstract}


\maketitle

\section{Introduction}
Generative learning has captured much attention in the field of classical machine learning over the last decade, with the advent of deep generative models like variational autoencoders (VAEs), generative adversarial networks (GANs), diffusion models, and most recently large auto-regressive models \cite{Kingma2014, Goodfellow2014, pmlr-v37-sohl-dickstein15}. Over the same period of time, the field of quantum machine learning has emerged and grown fast, with a recent focus on variational quantum algorithms \cite{Cerezo_2021}, which are hybrid quantum-classical models that can be implemented on noisy-intermediate scale quantum (NISQ) devices \cite{Preskill_2018}. In these algorithms, the model is implemented by a parametrized quantum circuit often called \emph{ansatz}, and the parameters are optimized through a classical procedure. While classification tasks have been studied extensively \cite{Schuld_2020, farhi2018}, considerable progress has also been achieved on quantum generative models \cite{Tian_2022}. Quantum GANs and quantum VAEs were introduced in \cite{Lloyd_2018, Dallaire_Demers_2018} and \cite{Khoshaman_2018} respectively, while energy based models such as quantum Boltzmann machines were studied in \cite{Amin_2018, wiebe2019generative}.


Quantum Circuit Born Machines (QCBMs) were introduced as a circuit-based model \cite{Benedetti_2019, Liu_2018}, where the circuit ansatz prepares a state and the Born rule is naturally implemented by the measurements at the end of the circuit. These models are implementable on NISQ hardware, and a first experimental realization was obtained in \cite{Benedetti_2019} on an ion trap quantum processor. Since then, further demonstrations of QCBMs and QCBM-like models were realized on ion traps \cite{Zhu2019, Rudolph_2022}, as well as on superconducting platforms \cite{Hamilton2019, Coyle_2021, leytonortega2021}. 

Each platform that is being considered for the development of a quantum computer has its strengths and weaknesses, whose importance may vary according to the task at hand. Fast sampling, for instance, could be particularly interesting for generative learning, and is a benefit of photon-based devices. Quandela introduced in \cite{maring2023} a quantum processor called \emph{Ascella}, made of a quantum-dot single-photon source supplying a universal linear optical network on a reconfigurable chip. A very similar processor with a larger chip called \emph{Altair} was recently developed \cite{QuandelaCloud}. Photonic platforms are also known for demonstrations of quantum advantage \cite{Zhong_2020, Madsen2022} via Gaussian boson sampling \cite{Hamilton_2017}.

While such devices are currently state-of-the-art, they have not yet entered a regime where quantum error correction (QEC) can be applied at scale.  In contrast to QEC protocols, which require a large overhead of qubits to be realised \cite{Lidarbook}, quantum error mitigation (QEM) replaces this qubit overhead with a sampling overhead.  More precisely,  QEM requires more runs of a quantum algorithm, but allows this algorithm to be run on a NISQ device without the need for any ancillary qubits \cite{Endo_2018,QEMreview}.
This makes QEM particularly useful for current and near-term quantum hardware.
Photonic devices suffer significantly from a particular type of noise: erasure noise in the form of photon loss. In related work \cite{MezherMills2024}, a recycling mitigation technique is presented that deals with photon loss. This technique makes use of photonic output states that would normally be discarded in post-selection, and provably outperforms post-selection up to large sample sizes.

In this work, after introducing a QCBM scheme for linear optical circuits, we show that its training is significantly improved by the recycling mitigation technique in realistic scenarios with photon loss. We demonstrate several situations where a QCBM becomes trainable if recycling mitigation is applied, including a proof-of-concept experiment on our photonic quantum processor \emph{Altair}. 
In the discussion, we consider the question of classical simulability, hardness, and the link with boson sampling \cite{aaronson2011,berkowitz2018stability}.

\begin{figure*}
	\centering
	\begin{subfigure}[c]{0.6\linewidth}
		\includegraphics[width=\linewidth]{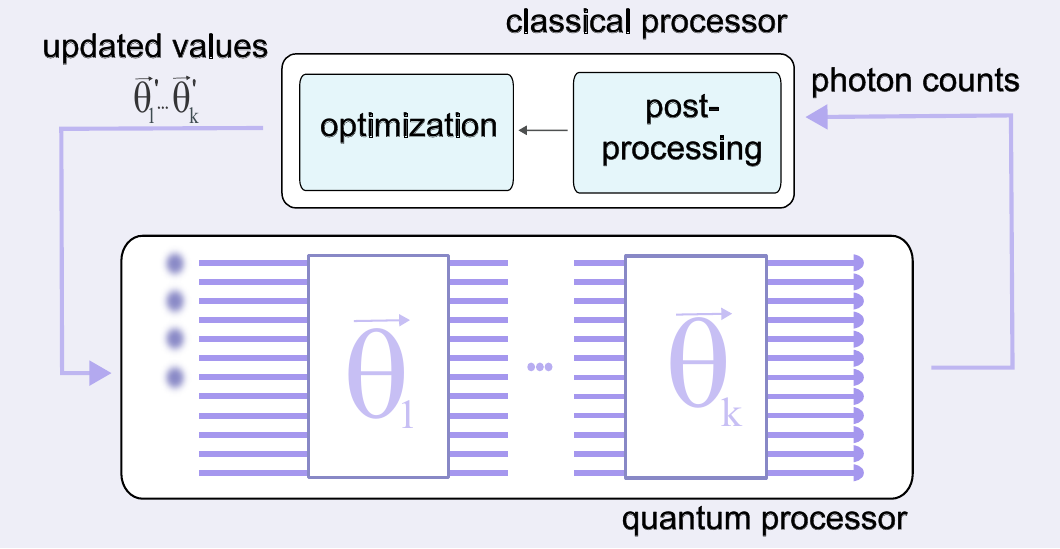}
		\caption{}
		\label{fig:ansatzsubfigA}
	\end{subfigure}
	\begin{subfigure}[c]{0.25\linewidth}
		\includegraphics[width=\linewidth]{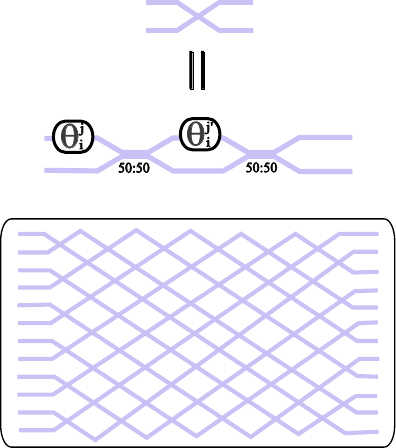}
		\caption{}
		\label{fig:ansatzsubfigB}
	\end{subfigure}
	\caption{(a) QCBM ansatz pictured for a  circuit with  $m = 12$ modes and $n =4$ photons and input Fock state $\ket{101010100000}$. 
 We allow the variational blocks parametrized by $\vec{\theta_i}$ to be repeated $k$ times. 
 The photon coincidence counts measured at the end of the circuit are sent to a classical processor. 
 (b) One variational block is a universal interferometer. Each crossing between the optical modes corresponds to a Mach-Zender interferometer that is parametrized by two phases.} 
	\label{qcbm_ansatz}
\end{figure*}

\section{A photonic QCBM}
\emph{Generative learning.} 
Given a set of data instances $X$ and labels $Y$, while a discriminative or classification model aims at predicting $P(Y|X)$, a generative model seeks to estimate the joint distribution $P(X, Y)$, or simply $P(X)$ if there are no labels. If the model has parameters $\vec{\theta}$, a loss function $\mathcal{L}$ is defined which should express how close the distribution of the generated samples is to the one of X.  The loss is optimized so as to obtain $\vec{\theta}^* = \arg \min_{\vec{\theta}} \mathcal{L}(\theta|X)$.
Once the model is trained, it can be used to generate new samples with similar properties to the training dataset.

\emph{Photonic architecture.} We propose using an ansatz based on the universal interferometer design of \cite{Clements_2016}, as shown in Figure \ref{qcbm_ansatz}.  We believe our choice to be well motivated, since the interferometer guarantees interactions between all modes, allows for arbitrary unitary transformation on the modes, and matches the scenario of boson sampling. Note that in \cite{shankar2022variational, Shi2023}, the authors present QCBMs with a different linear optical ansatz 
resembling the gate-based one of \cite{Liu_2018}.

Our QCBM ansatz consists of a linear optical circuit with $m$ modes and $n$ photons. It receives a Fock state as input: $\ket{\Vec{n}_{in}} = \ket{n^{in}_1, ..., n^{in}_m}$, where $n^{in}_i$ indicates the number of photons in mode $i$ and $\sum_{i} n^{in}_i = n$. This state is transformed according to the parametrized universal interferometer which implements a unitary $U(\Vec{\theta})$. Indeed, the scheme of \cite{Clements_2016} allows for the decomposition of an arbitrary unitary into a product of $T(\theta^{j}_{i}, \theta'^{j}_{i})$ matrices. Each matrix corresponds to a Mach-Zender element between the  $j$-th and $(j+1)$-th modes, as depicted in Fig \ref{fig:ansatzsubfigB}, and has the form:

\begin{equation*}
    \begin{pmatrix}
1&0&\cdots&&&\cdots&&0\\
0&1&&&&&&\vdots\\
\vdots&&&e^{i\theta^{j}_{i}} \cos{\frac{\theta'^{j}_{i}}{2}}&-\sin{\frac{\theta'^{j}_{i}}{2}}&\\
&&&e^{i\theta^{j}_{i}} \sin{\frac{\theta'^{j}_{i}}{2}}&\cos{\frac{\theta'^{j}_{i}}{2}}&&&\vdots\\
\vdots&&&&&&1&0\\
0&\cdots&&&&\cdots&0&1\\
\end{pmatrix}
\end{equation*}
There are $m(m-1)/2$ Mach-Zender elements per universal interferometer as depicted in Figure \ref{fig:ansatzsubfigB}, and thus $m(m-1)$ phases or parameters per block. While a sequence of unitaries can always be rewritten as one single unitary and thus implemented with one block, we consider it interesting to explore whether this parametrization structure can impact training performance, along with other parametrization choices such as setting all $\theta^{j}_{i}$ to a fixed value and optimizing only over the $\theta'^{j}_{i}$. In practice however, we obtained the best numerical results for $k = 1$.

At the output of the circuit, the state is measured by photon detectors. Different Fock states $\ket{\Vec{n}_{out}} = \ket{n^{out}_1, ..., n^{out}_m}$ are detected as arrangements of photons in the output modes, that we denote $s 
 = (n^{out}_1, ..., n^{out}_m)$. With photon loss, the $n^{out}_i$ may not sum to $n$. Detecting all possible output states is not straightforward in practice because photon number resolving detectors are not readily available with current technology. To match experimental settings, we make the simplifying choice of using threshold detectors in most of our simulations, thus only allowing $n^{out}_i$ to be $0$ or $1$ (`click' or `no click'). We refer to the Supplemental Material \cite{supplemental} for more details on no-collision regimes and recycling mitigation. 
 
 After detection, outputs $s$ are mapped through classical post-processing to the space of the data that the model aims to generate. For instance, if we consider simple one-dimensional numerical data $X$ set between $x_{\text{min}}$ and $x_{\text{max}}$, we proceed by binning the range of values $[x_{\text{min}}, x_{\text{max}}]$, then defining a mapping between each possible output and each bin. If the data is discrete, there is no need for the binning, and a certain output $s^*$ can directly be interpreted as a value $x^*$.

\emph{Training strategy.} The model is trained as a variational algorithm, which means that a classical optimization procedure is applied to the model parameters $\Vec{\theta}$, as shown in Figure \ref{qcbm_ansatz}. 
In many of our simulations, we use the Kullback-Leibler (KL) divergence \cite{Kullback1951} as a loss function. We recall the definition of the KL divergence, given two probability distributions $Q$ and $P$ defined on the same sample space $\mathcal{X}$:
\begin{equation}
    KL (P || Q)= \sum_{x \in \mathcal{X}} P(x) \text{log}\left(\frac{P(x)}{Q(x)}\right).
\end{equation}
We also use the Total Variational Distance (TVD) which can be written on discrete domains as \cite{levin2017markov}:
\begin{equation}
    \text{TVD} (P, Q)= \frac{1}{2} \sum_{x \in \mathcal{X}}|P(x) - Q(x)|.
\end{equation}
Both the KL and the TVD measure how different $Q$ is from $P$. The lower the value of the metric, the more similar the distributions are, and a value of 0 can reached if the two distributions are identical. This is not always desirable as it may indicate overfitting or memorization of the model, and slightly higher values could imply better generalization properties as pointed out in \cite{Gili_2023}. 

In generative learning, additional metrics are often used outside of the loss function to evaluate new model samples, sometimes with respect to a test set. For the generation of images, examples include the Fréchet inception distance \cite{fid} or the structural similarity index measure \cite{ssim}. We consider this to be outside the scope of this article, as we focus on the impact of recycling mitigation on loss function behaviour.

Other popular choices of loss functions for quantum models include the Maximum Mean Discrepancy (MMD) \cite{Gretton_2012} or the Sinkhorn divergence studied in \cite{Coyle_2020}. We refer to the last section of this paper for a discussion around the choice of the loss function, and to the Supplemental Material \cite{supplemental} for more details on model training.

\section{Recycling mitigation}
Recycling mitigation is a technique for mitigating the effects of uniform photon loss in linear optical quantum circuits consisting of single photons, multi-mode linear optical interferometers, and single photon detectors \cite{MezherMills2024}. At the heart of recycling mitigation are the \emph{recycled probabilities}, constructed from the statistics of the linear optical circuit affected by photon loss, that is, statistics where the number of detected  photons at the output of the circuit is less than 
 at the input. The standard method in near-term photonic quantum algorithms is to apply post-selection, i.e. discard the statistics where photon loss occurs, and only use the statistics where no loss occurred for the computation. 

For a given output $s$ of a linear optical circuit, as defined above, the recycled  probability $p_R(s)$ is of the form
\begin{equation}
   p_R(s)=p_1p_{id}(s)+(1-p_1)I(s),
\end{equation}
where $0\leq p_1 \leq 1$, $p_{id}(s)$ is the ideal probability of outcome $s$ in the absence of photon loss, and $I(s)$ is the  \emph{interference term}, which is an artifact of constructing recycled probabilities from probabilities affected by photon loss. Recycling mitigation consists of applying certain classical post-processing techniques, detailed in \cite{MezherMills2024}, to the set of recycled probabilities $\{p_R(s)\}$, resulting in a set of mitigated probabilities $\{p_{mit}(s)\}$, which are approximations of the ideal probabilities $\{p_{id}(s)\}$.

On an intuitive level, recycling mitigation uses lossy photon statistics in order to get  estimates of the ideal (lossless) probabilities. Crucially, the goal of recycling mitigation is to beat post-selection, in the sense of getting lower statistical errors on the estimated ideal probabilities. Post-selection is an unbiased estimator of the ideal probability with the only error being a statistical error $\epsilon_{stat,post}$. Conversely, recycling mitigation results in a \emph{biased} estimator of the ideal probability, with a bias error $\epsilon_{bias}$ as well as a statistical error $\epsilon_{stat,rec}$. Because recycling mitigation uses the (more probable) lossy statistics to construct its estimators, then $\epsilon_{stat,rec} < \epsilon_{stat,post}$ by construction. Furthermore, it was shown in \cite{MezherMills2024} that the bias error is small enough such that the inequality 
\begin{equation}
\epsilon_{stat,rec}+\epsilon_{bias} \leq \epsilon_{stat,post},
\end{equation}
holds for up to a very large  number of samples, and for a generic choice of interferometer. This makes recycling mitigation a powerful error mitigation technique applicable to near-term photonic quantum hardware.


\section{Numerical simulations}
\begin{figure*}
	\centering
	\begin{subfigure}{0.3\linewidth}
		\includegraphics[width=\linewidth]{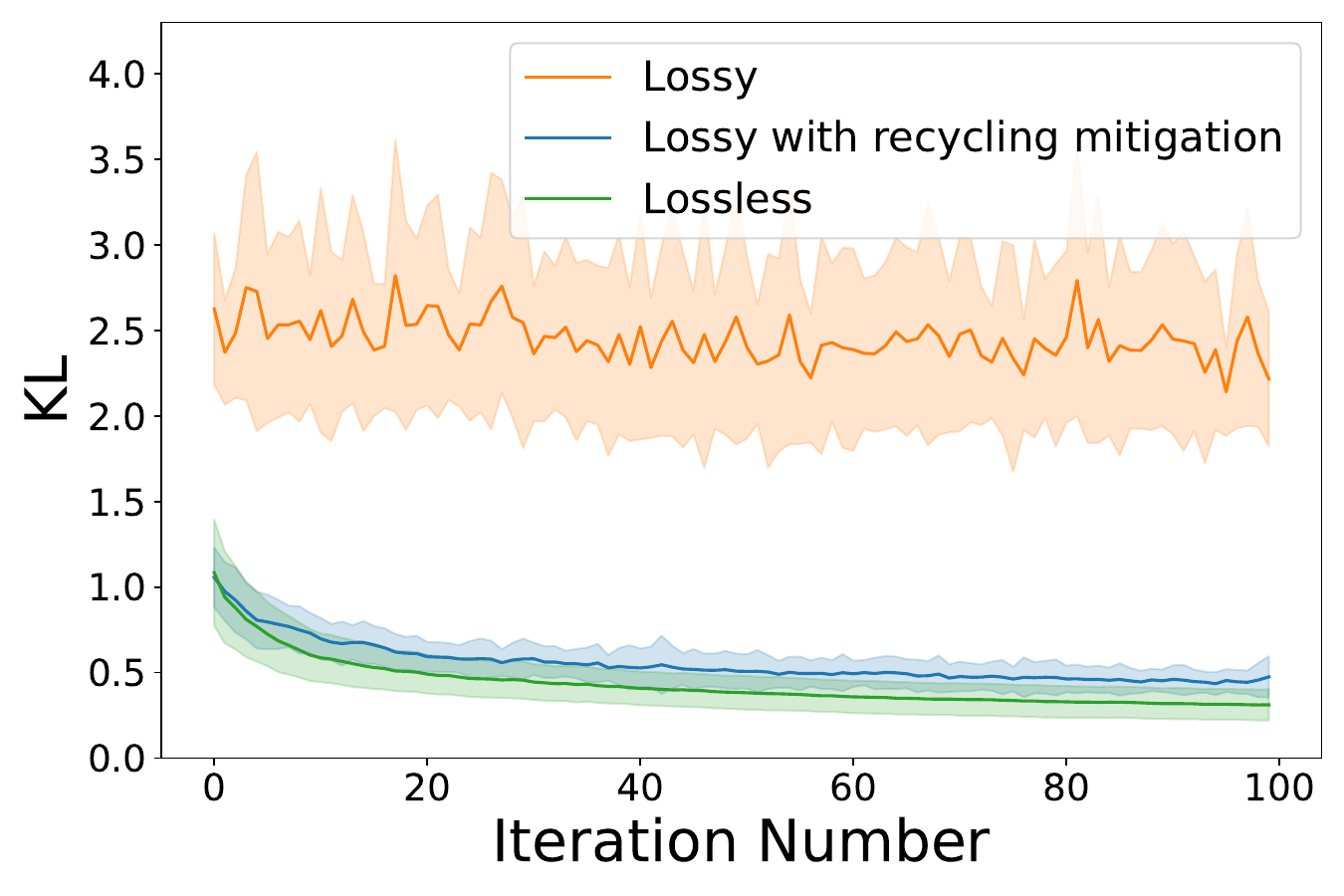}
		\caption{$n = 4$, $m = 12$, $N = 200000$, $\eta = 0.8$}
		\label{fig:subfigA}
	\end{subfigure}
	\begin{subfigure}{0.3\linewidth}
	        \includegraphics[width=\linewidth]{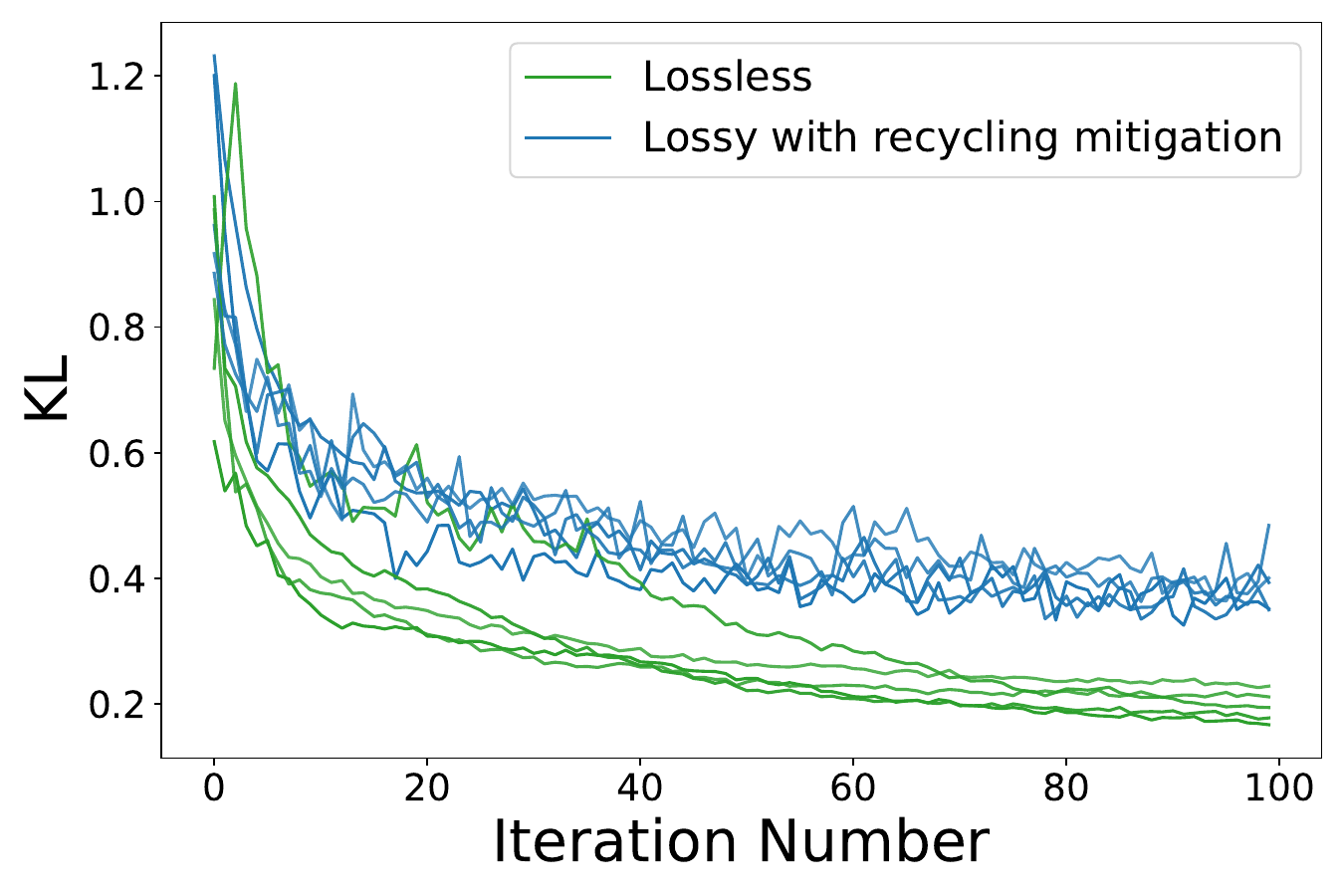}
	        \caption{$n = 4$, $m = 12$, $N = 200000$, $\eta = 0.8$}
	        \label{fig:subfigC}
         \end{subfigure}
         \begin{subfigure}{0.3\linewidth}
	        \includegraphics[width=\linewidth]{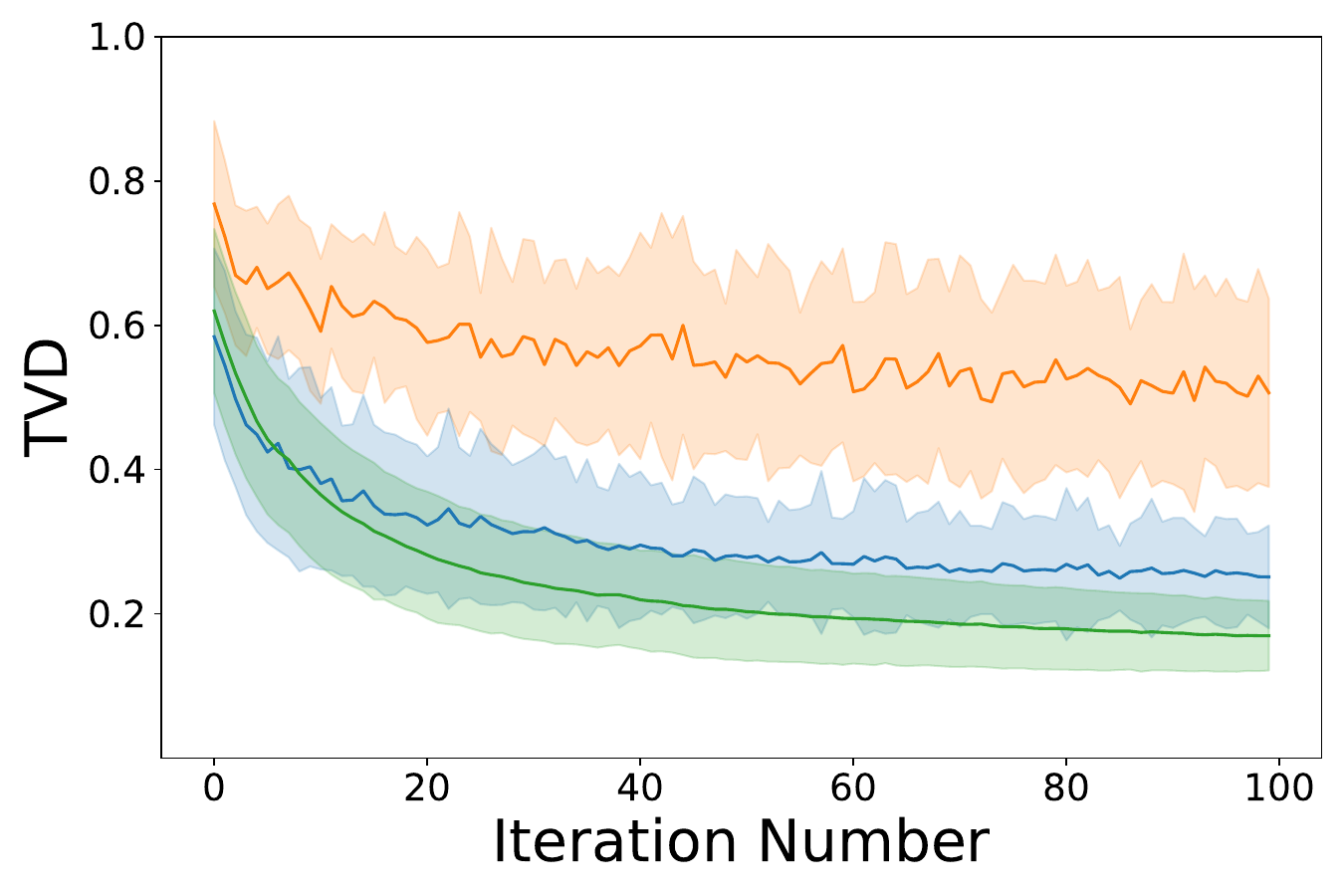}
	        \caption{$n = 4$, $m = 12$, $N = 60000$, $\eta = 0.8$}
	        \label{fig:subfigD}
         \end{subfigure}
        \begin{subfigure}{0.3\linewidth}
	        \includegraphics[width=\linewidth]{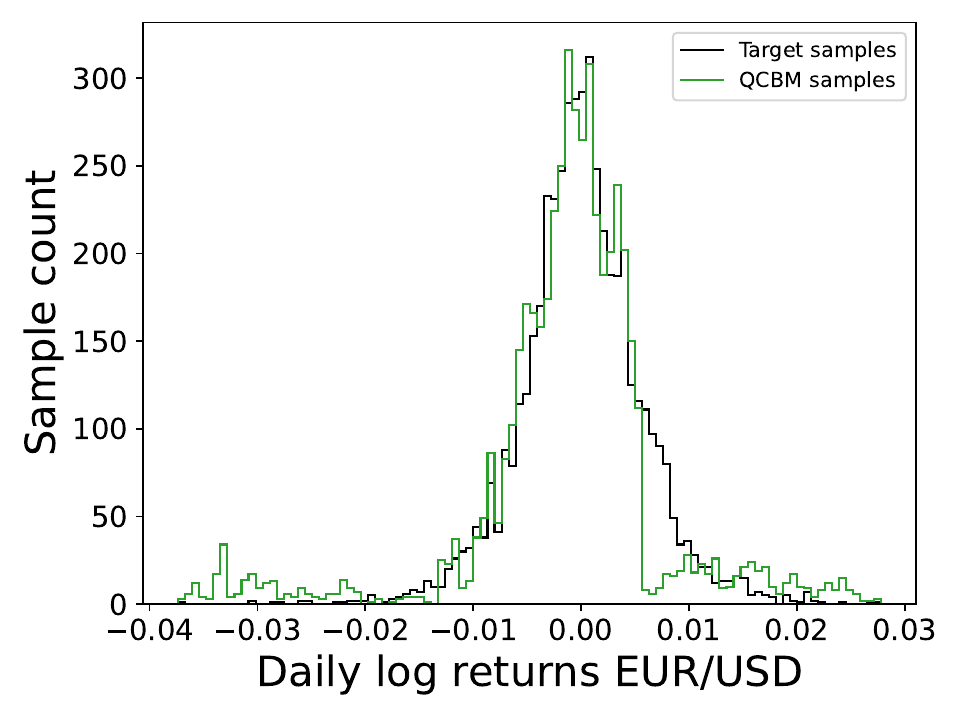}
	        \caption{$n = 4$, $m = 12$, $N = 60000$, $\eta = 0.8$}
	        \label{fig:subfigsamples}
         \end{subfigure}
        \begin{subfigure}{0.3\linewidth}
	        \includegraphics[width=\linewidth]{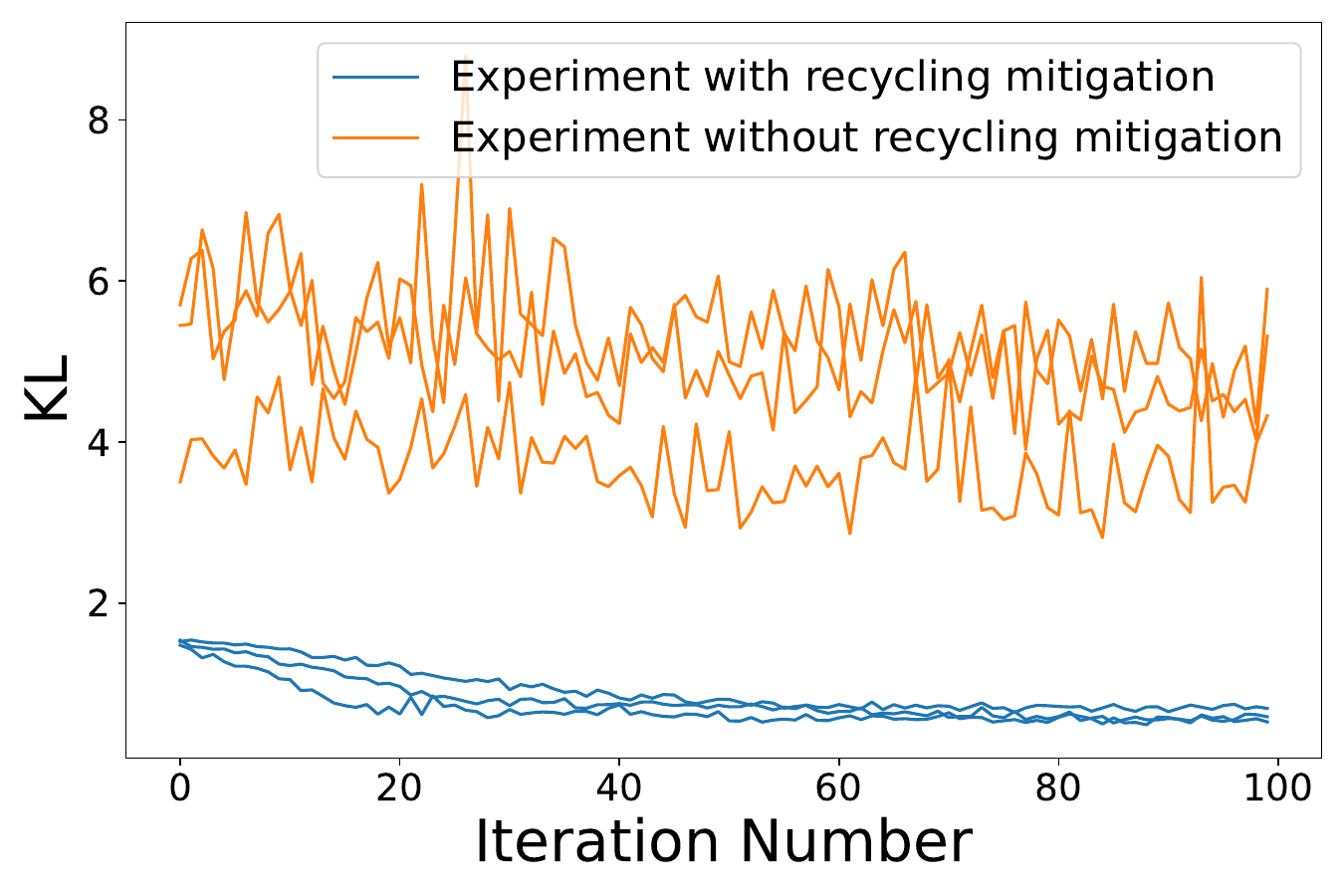}
	        \caption{$n = 4$, $m = 12$, $N = 5\times10^7$, $\eta \approx 0.96$}
	        \label{fig:subfigE}
         \end{subfigure}
         \begin{subfigure}{0.3\linewidth}
	        \includegraphics[width=\linewidth]{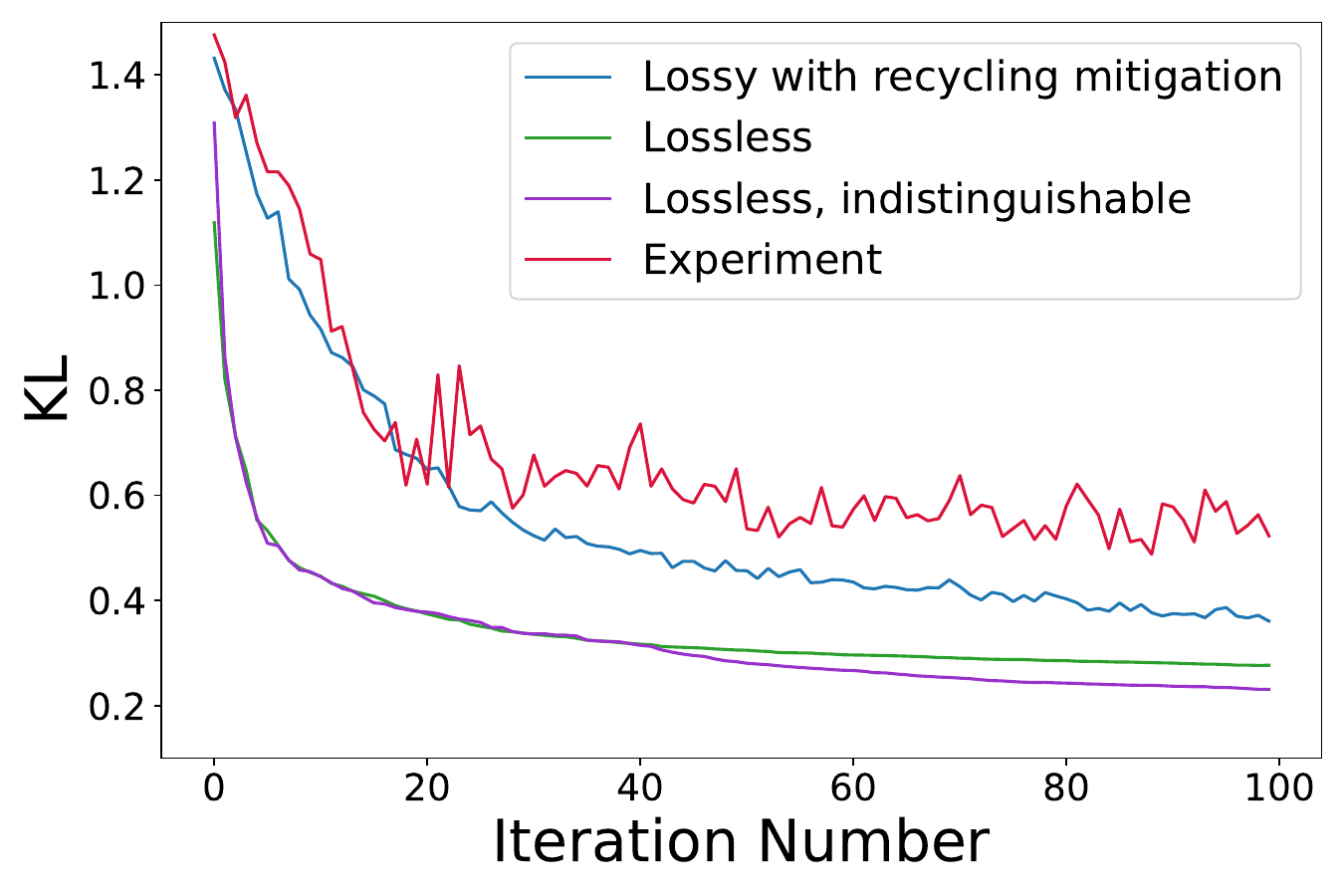}
	        \caption{$n = 4$, $m = 12$, $N = 5\times10^7$, $\eta \approx 0.96$}
	        \label{fig:subfigF}
         \end{subfigure}
	\caption{(a) Value of the KL loss function for the mixed Gaussian case, averaged over 20 training instances. The shaded area depicts a standard deviation from the mean. Each training instance has different initialization parameters. (b) We highlight the 5 best training instances out of the 20 simulations. (c) Value of the TVD loss function for the financial foreign exchange dataset. (d) Comparison between the target samples from the financial dataset and the samples generated by one of the trained (lossless) QCBM models. (e) Experimental results on \emph{Altair} for the mixed Gaussian case: three training instances with recycling mitigation, and three without. (f) Comparison between best mitigated experimental instance and corresponding simulations. All but one curve are affected by photon distinguishability.}
	\label{simulations}
\end{figure*}

\emph{Perceval} \cite{Heurtel_2023} is a software platform for discrete-variable photonic quantum computing, which we use in our simulations. 
We denote by $N$ the number of initial samples for a given simulation, and by $\eta$ the loss parameter, i.e. the probability that one photon is lost during the computation. Note that, when working with $n$ photons, the probability of observing $n$-photon coincidence counts at the output of the circuit scales as $(1-\eta)^n$. 

Given the same resources, we typically compare the following cases: (1) the ideal lossless case; (2) the lossy case, where for typical values of $\eta$, only a small fraction of $N$ are detected as $n$-photon counts, and where the $n'$-photon counts with $n' < n$ are discarded through post-selection; and (3) the same lossy case where recycling mitigation is applied, which means that information from the $n'$-photon counts is taken into account to train the QCBM, as explained in the previous section.

We summarise our results in Figure \ref{simulations}. We consider in Figure \ref{fig:subfigA}-\ref{fig:subfigC} the problem of learning a bimodal distribution consisting of two superimposed Gaussian distributions with different means, as in \cite{zoufal2021generative}. This distribution can be generated straightforwardly in Python. 
We then consider financial foreign exchange data in Figure \ref{fig:subfigD} -- as in \cite{Coyle_2021}, we collected daily log returns of the EUR/USD currency pair over the last 20 years to form the dataset. There, we use the TVD as a loss function as we found it more suitable for a sparser distribution. In Figure \ref{fig:subfigsamples}, we compare the distributions of the generated and target data.

We observe that for given $N$ and $\eta$, recycling mitigation significantly improves the training of lossy QCBMs and brings the value of the loss function close to the lossless one, while the lossy unmitigated case remains mostly untrainable. While the steadily decreasing trend of the loss function shows that the model is learning, we note that there is room for improvement in the model optimization, which would be interesting to study in future work. In particular, the importance of initialization parameters and the variance in the training process can be observed in Figure \ref{fig:subfigC}.  Note that simulation results are displayed for loss parameter $\eta = 0.8$, which is within reach for near-term hardware \cite{maring2023}. Our results remain valid for higher loss regimes and values of $\eta$, as we show in the next section.


\section{Experiment}
For our experimental setup, we use the \emph{Altair} quantum photonic integrated processor. We implement the same scenario as in Figure \ref{fig:subfigA}, although the main difference is that we find ourselves in a high loss regime: $\eta$ is estimated to be around 0.96. Additionally, the device is affected by other types of noise, such as photon distinguishability \cite{Shchesnovich2015, Tichy2015}, which would require a different kind of error mitigation \cite{Borzenkova_2024}. Further technical details on the setup can be found in the Supplemental Material \cite{supplemental}.

We present our results in Figures \ref{fig:subfigE} and \ref{fig:subfigF}. All runs were obtained with the same initialization, for a fairer comparison. In Figure \ref{fig:subfigE}, we observe as before a large gap between the unmitigated and mitigated cases. In Figure \ref{fig:subfigF}, we plot one mitigated experimental training instance along with corresponding simulations. We note the effect of distinguishability, as the fully noiseless simulation reaches a lower value of the loss. 

While the experimental data follows the simulation, there remains a difference in the final loss function value which we can attribute to the following factors: there are variations due to the stochasticity of the optimizer; in practice, losses are not fully uniform throughout the setup; and there may be inaccuracies in the transpilation, which is the process of applying adequate voltages to the chip to implement the right phase values, as explained in \cite{Fyrillas2024}.

As an alternative strategy for dealing with losses, we could have increased the initial number of samples $N$ enough to make the unmitigated case trainable. However, this would  significantly increase the time required for each experiment. 


\section{Discussion}
When considering quantum advantage of models based on linear optical circuits, it is natural to look for a connection with boson sampling. In this context, we note that generative models may be better candidates for a potential quantum advantage than classifiers obtained from a similar circuit, such as the ones in \cite{Gan_2022, maring2023}. Indeed, as Aaronson and Arkhipov pointed out in their original article \cite{aaronson2011}, if the circuit is used for probability estimation instead of sampling, there is a polynomial-time (in the input size) classical algorithm by Gurvits  which approximates the permanent of any submatrix of a unitary matrix to within polynomially-small additive error \cite{Gurvits_2005, aaronson2012}. This means that estimating the permanent from a polynomial (in input size) number of experiments does not yield an exponential advantage over classical algorithms estimating the permanent, by standard arguments from statistics \cite{hoeffding2012collected}. 

However, since generative models like QCBMs are based on sampling rather than permanent estimation, there remains potential for an exponential advantage. In \cite{Coyle_2020}, the authors clarified the distinction between classical hardness of simulating the circuits that make up the model, and what they defined as \emph{quantum learning supremacy}: the ability of a quantum model to provably outperform all classical models in a certain task. Our model has the capability to achieve classical hardness, provided $n$ and $m$ have sufficiently high values, and losses are limited \cite{Aaronson_2016}. Indeed, by varying the values of $\vec{\theta_k}$, the model can span across all possible $m$-mode interferometers, and there is very strong evidence that sampling, either exactly or approximately,  from the output distribution of  generic $m$-mode interferometers is hard to do classically \cite{aaronson2011, bouland2023complexitytheoreticfoundationsbosonsamplinglinear}. Quantum learning supremacy, however, remains an open question for this photonic QCBM as well as for the qubit-based one presented in \cite{Coyle_2020}.

While the above observations hold true when using a trained model to produce samples, they do not always apply to the training phase of the model. For instance, in our training, we exploited the output probability distribution of the circuit, which allowed us to apply recycling mitigation. The use of an explicit loss function \cite{rudolph2023trainability} based on the output probability distribution then comes at no additional cost, so it made sense for us to use the KL divergence or the TVD. 

Nevertheless, this particular approach to training retains potential, as it turns out that there is still a polynomial advantage in estimating probabilities given by the modulus squared of a permanent in a boson sampling experiment. Indeed, we show in the Supplemental Material \cite{supplemental} that statistics collected from $t$ i.i.d runs of a boson sampler allow, with high probability, for estimating the modulus squared of the permanent of an arbitrary matrix to a smaller additive error, as compared to the additive error  estimate  obtained from  Gurvits algorithm with a run time of $O(t)$. While these results hold for the lossless case only, this leads us to the open question: could recycling mitigation make these claims robust to noise? Finally, with respect to the  recent results of \cite{angrisani2024}, we point out that they do not apply to our setup as we do not consider expectation values of Pauli operators, but a linear optical evolution.

In future work, it would be interesting to extend recycling mitigation techniques to work with implicit loss functions like the MMD or the Sinkhorn divergence. These would allow for the training phase to be based on samples only, which would also be key to scaling the model \cite{rudolph2023trainability}. 
Other future research directions include developing mitigation techniques for other types of photonic errors, as well as improving model optimization in the context of photonics, focusing on topics such as gradient evaluation or parameter initialization, or considering circuit components beyond linear optics such as adaptivity and feedforward. 

\section{Conclusion}
Our work shows that recycling mitigation can positively impact the training of quantum machine learning models on realistic photonic devices. Given a certain amount of resources, for several scenarios recycling mitigation makes QCBMs that would otherwise be untrainable trainable. Our work also proposes a straightforward design for a photonic-tailored QCBM, which we hope can be built upon, as we make our code available \cite{repo}.

\section{Acknowledgements}
The authors would like to thank Eric Bertasi and Marion Fabre for their precious help improving the code, as well as Joseph Bowles and Nicolas Maring for fruitful discussions and feedback on the manuscript. This work has been co-funded by the European Commission as part of the EIC accelerator program under the grant agreement 190188855 for the SEPOQC project; by the Horizon-CL4 program under the grant agreement 101135288 for the EPIQUE project; and by the QuantERA project ResourceQ under Grant Agreement ANR-24-QUA2-007-003.

\bibliographystyle{apsrev4-1}
\bibliography{bibliography}
\nocite{Schuld_2019}
\nocite{vidal2018calculus}
\nocite{defelice2024}
\nocite{facelli2024}
\nocite{pappalardo2024}
\nocite{Spall_1998}
\nocite{Caro_2022}
\nocite{mezher2023solving}

\onecolumngrid
\appendix
\section{Supplemental Material}
\section{Recycling mitigation 
 and no-collision regime}\label{guarantees}
We note that the performance guarantees for recycling mitigation derived in [29] are for outputs $s$  where at most one photon occupies  a given mode. In the so-called no-collision regime where $m\gg n$ it is natural to expect such a behaviour of the outputs [30]. In our simulations and experiments however, $m$ and $n$ have a comparable size. Despite this, we observe a significant impact of the mitigation. In our simulations, we can map the simulated outputs $s$ where more than one photon can occupy a mode to threshold outputs, and use them in the recycling mitigation, whereas in post-selection they are discarded and the distribution is renormalised.

\section{Model training and hyperparameters}\label{model}
Gradient evaluation in quantum machine learning is less straightforward than in classical machine learning, due to the lack of an equivalent for the backpropagation algorithm. The parameter-shift method [58, 59] is the current standard for variational quantum circuits -- and only very recent work has focused on photonic quantum circuits [60-62]. Here, we use an optimizer based on the Simultaneous Perturbation Stochastic Approximation (SPSA) method [63]. It involves stochastic gradient approximation, thus allowing for fewer evaluations of the model at each step, which makes it practical for experiments. We start the optimization by randomly initializing the circuit parameters $\Vec{\theta}$.

During our training process, we compared different ansätze, or hyperparameters. We explored the case where $k > 1$, referring to Figure 1a of the main text, including the case where the parameters are repeated between blocks $\vec{\theta_1} = \vec{\theta_2} = ... = \vec{\theta_k}$. Interestingly, the repetition of gates has been shown to be useful for model properties in certain contexts [64]. We also tested the case where each Mach-Zender element between two modes in a block is parametrized by only one phase ($\theta'^{j}_{i}$ in Figure 1 of the main text), while the other one ($\theta^{j}_{i}$) is set to 0, thus halving the total number of parameters. Overall, we observed the best results for a simple ansatz made of a single block ($k = 1$), keeping both parameters per Mach-Zender element. This scenario corresponds to the results displayed in Figure 2 of the main text. As mentioned in the main text, a sequence of unitaries can be rewritten as a single unitary and thus implemented with one universal interferometer only. In that case, the choice $k > 1$ would correspond to a scenario where each phase in the circuit would be determined by more than one parameter of the model.

\section{Experimental setup}\label{experimental_setup}
The \emph{Altair} processor presents a similar architecture to the one presented in [21]. It consists of a single-photon source, a 10-mode active demultiplexer followed by fiber delays, a 20-mode photonic integrated circuit, and superconducting nanowire single-photon detectors (SNSPD). The single-photon source is a gated InGaAs quantum dot embedded inside a micropillar cavity. The quantum dot is excited by a laser to produce single photons on demand. At the time of the experiment, single-photon purity was estimated around $0.025 \pm 0.002$ and photon indistinguishability around $0.84 \pm 0.03$. The train of collected photons is then sent into the demultiplexer, which converts it into up to $n$ photons arriving simultaneously into different modes of the chip. The architecture of the chip is a universal interferometer (as in Figure 1b of the main text). Since the chip has $20$ modes, it contains $20 \times 19 = 380$ phases which are controlled and tuned thermo-optically. Note that a $12$-mode experiment can easily be implemented on a $20$-mode chip by injecting photons only into chosen modes among the first $12$, and blocking transmission into the last $8$ by controlling the phases. The SNSPD detectors are threshold detectors, which means that they can detect the presence of photons (`click' or `no click') and not the number of photons in each mode. The detection events are recorded via a time-to-digital converter. The \emph{Altair} processor is connected to a cloud platform [22], which allows specific operations to be sent to the processor, and which returns the results as detection statistics. 

\section{Approximating output probabilitities}\label{appendix_permanent}

In the context of boson sampling and classical hardness of the model training, we discuss approximating $|\mathsf{Per}(A)|^2$ for an arbitrary $n \times n$ matrix $A$ to a given additive error precision, $\mathsf{Per}(.)$ being the matrix permanent. We show that computing $|\mathsf{Per}(A)|^2$ from statistics collected from $t$ i.i.d runs of a boson sampler produces, with high probability, a more accurate estimate of $|\mathsf{Per}(A)|^2$
than Gurvits which is run for time $O(t)$. More precisely, we show in Theorems 1 and 2 that Gurvits algorithm requires a runtime $t^{*}$ with $t^{*}=O(n^2t)$ to produce an additive error approximation of $|\mathsf{Per}(A)|^2$ of the same order of magnitude  as that obtained from $O(t)$ runs of a boson sampler. 

Let $A$ be an $n \times n$ matrix with complex entries. Gurvits algorithm [50] provides an estimate $\mathsf{E}(\mathsf{Per}(A))$ of $\mathsf{Per}(A)$, the permanent of $A$, to within additive precision $\epsilon ||A||^n$, where $||A||$ is the spectral norm of $A$, and $\epsilon>0$. More precisely, the output of Gurvits algorithm is $\mathsf{E}(\mathsf{Per}(A))$ such that
$$|\mathsf{E}(\mathsf{Per}(A))-\mathsf{Per}(A)| \leq \epsilon ||A||^n.$$

The runtime of Gurvits is $O(\frac{n^2}{\epsilon^2})$, and the above inequality holds with high probability (by Hoeffding's inequality [52]). The $O(n^2)$ part comes from the complexity of computing the Glynn coefficients in Ryser's formula.

We prove the following.

\begin{theorem}
\label{th1}
    For an $n \times n$ matrix $A$ with  entries in $\mathbb{C}$  and $\epsilon>0$, Gurvits algorithm [50] outputs with high probability in $O(\frac{n^2}{\epsilon^2})$-time an estimate $\mathsf{E}(\mathsf{Per}^2(A))$ such that
$$\big ||\mathsf{E}(\mathsf{Per}^2(A))|-|\mathsf{Per}^2(A)| \big| \leq \epsilon(2+ \epsilon) ||A||^{2n}.$$
\end{theorem}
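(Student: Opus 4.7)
The plan is to reduce the theorem to the classical Gurvits guarantee by squaring the estimate and using simple triangle-inequality bookkeeping. Let $P := \mathsf{Per}(A)$ and $E := \mathsf{E}(\mathsf{Per}(A))$, the Gurvits estimator, which by the bound recalled just before the theorem satisfies $|E - P| \leq \epsilon \|A\|^n$ with high probability in runtime $O(n^2/\epsilon^2)$. Define $\mathsf{E}(\mathsf{Per}^2(A)) := E^2$; this requires no new randomness and no extra asymptotic runtime, so the complexity claim is immediate.

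For the error bound, the key identity is
\begin{equation*}
\bigl||E|^2 - |P|^2\bigr| \;=\; \bigl(|E| + |P|\bigr)\,\bigl||E| - |P|\bigr|.
\end{equation*}
I would bound the two factors separately. The reverse triangle inequality gives $\bigl||E| - |P|\bigr| \leq |E - P| \leq \epsilon \|A\|^n$. For the sum, I use the standard bound $|\mathsf{Per}(A)| \leq \|A\|^n$ (the same inequality underlying Gurvits' construction, where each Glynn-style sample has magnitude at most $\|A\|^n$). Combined with the triangle inequality $|E| \leq |P| + |E - P| \leq (1+\epsilon)\|A\|^n$, this yields $|E| + |P| \leq (2+\epsilon)\|A\|^n$. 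Multiplying the two bounds produces exactly $\epsilon(2+\epsilon)\|A\|^{2n}$, as required, and since $|E^2| = |E|^2$ and $|P^2| = |P|^2$, the stated inequality follows.

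The argument is essentially a two-line calculation once the Gurvits guarantee is in hand; there is no genuine obstacle, only two small points to handle carefully. The first is to state clearly that squaring the Gurvits estimator does not change the success probability nor the asymptotic runtime (the squaring is a single arithmetic operation on the output). The second is to justify $|\mathsf{Per}(A)| \leq \|A\|^n$ cleanly, which I would do by noting that it is a deterministic consequence of Gurvits' Glynn-sample construction: every sample in the estimator has magnitude at most $\|A\|^n$, so letting $\epsilon \to 0$ in the concentration bound forces $|P| \leq \|A\|^n$.
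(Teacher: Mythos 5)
Your proposal is correct and follows essentially the same route as the paper: define the estimator as the square of the Gurvits output, factor the difference of squares, bound one factor by $\epsilon\|A\|^n$ via the (reverse) triangle inequality and the other by $(2+\epsilon)\|A\|^n$ using $|\mathsf{Per}(A)|\leq\|A\|^n$. The only cosmetic difference is that you work directly with $\bigl||E|^2-|P|^2\bigr|$ while the paper bounds the complex difference $|E^2-P^2|$ first and applies a final reverse triangle inequality at the end; the two are equivalent here.
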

\begin{proof}
    Let $\mathsf{E}(\mathsf{Per}^2(A)):=(\mathsf{E}(\mathsf{Per}(A))^2$, where $\mathsf{E}(\mathsf{Per}(A))$ is the output of Gurvits algorithm. $|\mathsf{E}(\mathsf{Per}^2(A))-\mathsf{Per}^2(A)|=|\mathsf{E}(\mathsf{Per}(A))-\mathsf{Per}(A)||\mathsf{E}(\mathsf{Per}(A))+\mathsf{Per}(A)| \leq \epsilon ||A||^n(|\mathsf{E}(\mathsf{Per}(A))|+|\mathsf{Per}(A)|). $

    Using a reverse triangle inequality
    \begin{equation*}
    \big||\mathsf{E}(\mathsf{Per}(A))|-|\mathsf{Per}(A)|\big| \leq |\mathsf{E}(\mathsf{Per}(A))-\mathsf{Per}(A)| \leq \epsilon ||A||^n,
    \end{equation*}
    and consequently $|\mathsf{E}(\mathsf{Per}(A))| \leq |\mathsf{Per}(A)|+\epsilon ||A||^n.$ Plugging this into the inequality of $|\mathsf{E}(\mathsf{Per}^2(A))-\mathsf{Per}^2(A)|$ gives
    \begin{equation*}
    |\mathsf{E}(\mathsf{Per}^2(A))-\mathsf{Per}^2(A)| \leq \epsilon||A||^n (2|\mathsf{Per}(A)|+\epsilon ||A||^n)  \leq \epsilon(2+\epsilon)||A||^{2n},
    \end{equation*}
    where the rightmost side follows from $|\mathsf{Per}(A)| \leq ||A||^n$. To complete the proof, we again use a reverse triangle inequality 
    \begin{equation*}
    \big ||\mathsf{E}(\mathsf{Per}^2(A))|-|\mathsf{Per}^2(A)| \big| \leq |\mathsf{E}(\mathsf{Per}^2(A))-\mathsf{Per}^2(A)| \leq \epsilon(2+\epsilon)||A||^{2n},
    \end{equation*}
    and this completes the proof.
\end{proof}
 Now, we prove an analogue of the above theorem for estimating $|\mathsf{Per}^2(A)|$ from the output statistics of linear optical circuits.
\begin{theorem}
\label{th2}
Let A be an $n \times n$ matrix with  entries in $\mathbb{C}$, and $\epsilon>0$. There is a linear optical circuit $U$ which, with $O(\frac{1}{\epsilon^2})$ samples and with high probability outputs an estimate $\mathsf{E}(|\mathsf{Per}^2(A)|)$ such that
$$\big|\mathsf{E}(|\mathsf{Per}^2(A)|)-|\mathsf{Per}^2(A)|\big| \leq \epsilon ||A||^{2n}, $$
\end{theorem}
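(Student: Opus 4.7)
The strategy is to exploit the fact that output probabilities of a boson sampler with single-photon inputs are given by moduli squared of permanents of submatrices of the implemented unitary. So if we can make the target matrix $A$ (suitably normalized) appear as the relevant $n \times n$ submatrix, then a single estimated output probability yields the desired estimate of $|\mathsf{Per}^2(A)|$.

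Concretely, I would proceed as follows. First, rescale: define $\tilde{A} = A / \|A\|$, so that $\|\tilde{A}\| \leq 1$. By a standard unitary dilation (e.g., $\tilde{A}$ can be completed to a $2n \times 2n$ unitary via $\begin{pmatrix} \tilde{A} & B \\ C & D \end{pmatrix}$ using the SVD of $\tilde{A}$), one constructs an $m \times m$ unitary $U$ (with $m \geq 2n$) whose top-left $n \times n$ block equals $\tilde{A}$. Second, since every $m \times m$ unitary is implementable by a linear optical interferometer on $m$ modes (e.g.\ by the Clements decomposition used in the main text), we take $U$ to be the interferometer of our boson sampling experiment. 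Third, input the Fock state $|1^n 0^{m-n}\rangle$ (one photon in each of the first $n$ modes) and focus on the output Fock state $s^*=|1^n 0^{m-n}\rangle$. By the Aaronson--Arkhipov formula, the probability of observing $s^*$ is exactly $p := |\mathsf{Per}(\tilde{A})|^2 = |\mathsf{Per}^2(A)|/\|A\|^{2n}$.

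Fourth, estimate $p$ directly from samples. Running the sampler $t$ times and letting $\hat{p}$ be the empirical frequency of outcome $s^*$, Hoeffding's inequality implies that for $t = O(1/\epsilon^2)$ we have $|\hat{p}-p| \leq \epsilon$ with high probability. Multiplying through by $\|A\|^{2n}$, the estimator $\mathsf{E}(|\mathsf{Per}^2(A)|) := \|A\|^{2n}\,\hat{p}$ satisfies
\begin{equation*}
\big|\mathsf{E}(|\mathsf{Per}^2(A)|) - |\mathsf{Per}^2(A)|\big| \;\leq\; \epsilon\, \|A\|^{2n}
\end{equation*}
with high probability, which is the required bound.

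The steps are all conceptually routine; the main subtlety is really the bookkeeping around what counts as a ``sample.'' Each run of the boson sampler produces one output configuration, and only the indicator of the event $\{s=s^*\}$ is used, so Hoeffding applies to a bounded $\{0,1\}$ random variable and gives the promised $O(1/\epsilon^2)$ sample complexity cleanly. One should note that the theorem does not need $A$ to have any special structure beyond being an arbitrary complex matrix: the dilation step absorbs this into the construction of $U$, and the bound $|\mathsf{Per}(\tilde{A})| \leq 1$ (which follows from $\|\tilde{A}\| \leq 1$) ensures that $p \in [0,1]$ so that Hoeffding is indeed applicable. Compared with Theorem \ref{th1}, this gives an estimate with additive error $\epsilon\|A\|^{2n}$ from $O(1/\epsilon^2)$ samples, versus Gurvits' additive error $\epsilon(2+\epsilon)\|A\|^{2n}$ from runtime $O(n^2/\epsilon^2)$, hence the claimed polynomial (in $n$) separation advertised in the paper.
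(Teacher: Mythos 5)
Your proof is correct and follows essentially the same route as the paper's: rescale $A$ by its spectral norm, embed the rescaled matrix into a unitary via dilation, implement it as a linear optical interferometer, estimate the relevant no-collision output probability $|\mathsf{Per}(A/\|A\|)|^2$ empirically with $O(1/\epsilon^2)$ samples via Hoeffding, and rescale by $\|A\|^{2n}$. You simply spell out details (the explicit dilation block structure, the Aaronson--Arkhipov formula, the choice of input/output Fock states) that the paper leaves implicit by citing its references.
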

\begin{proof}
Embed $A$ onto a linear optical circuit $U$ using the unitary dilation theorem. From the results of [65], by using Hoeffding's inequality [52], we can with high confidence and with $O(\frac{1}{\epsilon^2})$ samples estimate the probability $p=|\mathsf{Per}^2(A_s)|$ to within precision $\epsilon$, where $A_s:=\frac{A}{||A||}.$ Therefore, we obtain an estimate
$\mathsf{E}(|\mathsf{Per}^2(A_s)|)$ as follows

$$\big|\mathsf{E}(|\mathsf{Per}^2(A_s)|)-|\mathsf{Per}^2(A_s)|\big| \leq \epsilon.$$
  Noting that $|\mathsf{Per}^2(A_s)|=\frac{|\mathsf{Per}^2(A)|}{||A||^{2n}}$, then multiplying both sides of the above inequality by $||A||^{2n}$, and defining $\mathsf{E}(|\mathsf{Per}^2(A)|):=||A||^{2n}\mathsf{E}(|\mathsf{Per}^2(A_s)|)$ completes the proof.
\end{proof}
We note that, as $n$ grows, the probability encoding $\mathsf{Per}(A)$, i.e $p=\frac{|\mathsf{Per}(A)|^2}{||A||^{2n}}$ decreases exponentially with $n$.  This can be seen by noting that $\frac{|\mathsf{Per}(A)|}{||A||^{n}}$ is typically upper bounded  by a function which is decreasing as $e^{-O(n)}$ [31]. This is consistent with the fact that the size of the output space of the boson sampler  grows exponentially with increasing $n$. In the context of Theorem \ref{th2}, this would mean that, in order to asymptotically get  accurate estimates of $p$, $\epsilon$ must decrease exponentially with $n$, and consequently the number of samples collected $O(\frac{1} {\epsilon^2})$ must increase exponentially with $n$. We note that this conclusion about the $\epsilon$ scaling also applies to Theorem \ref{th1}.

\end{document}